\newtheorem{Corollary}{Corollary}
\newtheorem{theorem}{Theorem}
\newcolumntype{C}{>{\centering\arraybackslash}X} 
\pgfplotsset{compat=newest}
\pgfplotsset{plot coordinates/math parser=false}
\renewcommand{\ALG@beginalgorithmic}{\small}
\patchcmd{\Statex}{\hskip\algorithmicindent}{}{}{}
\newcommand{\cmark}{\ding{51}}%
\newcommand{\xmark}{\ding{55}}%
\def\BibTeX{{\rm B\kern-.05em{\sc i\kern-.025em b}\kern-.08em
    T\kern-.1667em\lower.7ex\hbox{E}\kern-.125emX}}
\begin{document}
\title{A Cost-efficient Credit-Based Shaper Deployment Framework for Time-Sensitive Networks}
%
%
%
%

\author{Santiago~Torres-Borda and Ahlem~Mifdaoui
\IEEEcompsocitemizethanks{\IEEEcompsocthanksitem ISAE SUPAERO, Université de Toulouse, France.\protect\\
E-mail: $\{$firstname.lastname$\}$@isae-supaero.fr}
\thanks{This article has been accepted for publication in the IEEE International Symposium On Real-Time Distributed Computing Proceedings. This is the author’s version
which has not been fully edited and content may change prior to final publication.}}

%
%


\IEEEtitleabstractindextext{%

\begin{abstract}
Time-sensitive networks are designed to meet stringent Quality of Service (QoS) requirements for mixed-criticality traffic with diverse performance demands. Ensuring deterministic guarantees for such traffic while reducing deployment costs remains a significant challenge. This paper proposes a cost-efficient partial deployment strategy for Time Sensitive Networking (TSN) devices within legacy Ethernet network. At the core of our approach is the Credit-Based Shaper (CBS), a key TSN scheduling mechanism. Unlike cost-prohibitive full CBS deployment, our approach selectively integrates CBS where it is most needed to enhance performance while reducing costs. Combining Network Calculus for schedulability verification and a heuristic optimization method for CBS configuration and placement, our proposal minimizes deployment costs while improving schedulability for medium-priority traffic and mitigating blocking delays for high-priority traffic. The feasibility and benefits of our approach are validated on a realistic automotive TSN use case with up to 70\% of reduction in TSN devices requirements compared to a full deployment.
\end{abstract}

\begin{IEEEkeywords}
Credit-Based Shaper (CBS), Time-Sensitive Networks (TSN), Performance Analysis, Network Calculus, Optimization
\end{IEEEkeywords}}

\maketitle

\IEEEdisplaynontitleabstractindextext

\section{Introduction}
The rapid growth of time-sensitive networks across industries such as automotive systems, industrial automation, and aerospace highlights their essential role in enabling deterministic communication. These networks are designed to meet stringent Quality of Service (QoS) requirements of mixed-criticality traffic with diverse performance demands. However, ensuring deterministic guarantees for heterogeneous traffic remains a challenging issue, particularly when there is a balance need between network performance and deployment costs.

A widely used solution in this area is based on high-rate Ethernet technology supporting QoS for up to eight traffic classes, IEEE 802.1Q \cite{8021QStd}, providing low costs and easy deployment. However, its reliance on a basic Non-Preemptive Strict-Priority (NP-SP) scheduling, known to be unfair, can result in schedulability issues and even starvation for medium- and low-priority traffic. More recently, Time-Sensitive Networking (TSN) standards \cite{TimeSensitiveNetworking} extending IEEE 802.1Q with advanced scheduling and reliability mechanisms are seen as a promising solution to improve determinism and availability. Despite their benefits, these standards are often cost-prohibitive, with TSN switch prices ranging from \texteuro 2,000 to \texteuro 10,000, in addition to being immature for industrial-scale adoption.

To address this issue, we propose a novel partial deployment strategy that integrates TSN devices selectively within a legacy Ethernet network. This approach capitalizes on the interoperability of TSN and Ethernet technologies to improve QoS and schedulability where it is most needed, without the high cost of full TSN deployment. By strategically deploying TSN components within an Ethernet network, our approach incurs a beneficial balance between network performance and cost-efficiency.

At the core of our approach is the Credit-Based Shaper (CBS), described in IEEE 802.1Qav \cite{Qav}, a key TSN scheduling mechanism. CBS enhances fairness by shaping high-priority traffic to reduce its burstiness, thereby decreasing scheduling delays for medium- and low-priority traffic. While prior research \cite{diemer_formal_2012,ashjaei_schedulability_2017,cao_independent_2018,zhao_minimum_2024} assumes full CBS deployment, our study demonstrates the benefits of a partial CBS deployment approach, effectively mitigating schedulability issues for lower-priority traffic and blocking effects for shaped higher-priority traffic \cite{diemer_formal_2012}, all while reducing costs. Our primary contributions in this paper are:
\begin{itemize}
    \item A Cost-efficient CBS deployment framework based on Network Calculus\cite{le_boudec_network_2001} theory to verify schedulability and widely used for TSN applications\cite{zhao_improving_2019,zhao_minimum_2024,mohammadpour_end--end_2018,Thomas,de_azua_complete_2014,torres-borda_towards_2024}. This theory is combined with a heuristic optimization approach for CBS configuration and placement within a minimum number of switches to reduce the deployment costs;
    \item The formulation of the partial CBS deployment as an optimization problem and its solving based on a cost-aware heuristic algorithm alongside its complexity analysis; 
    \item Validation of the benefits of such an approach on a representative automotive TSN use case.
\end{itemize}

The rest of this paper is organized as follows. Section II reviews related work. Section III presents the main system assumptions and model. Sections IV to VII detail our proposed framework overview, the schedulability verification method, the partial CBS deployment formulation and algorithm, respectively. Afterwards, we evaluate the performance of our proposal and validate its benefits on a realistic automotive TSN use case with reference to existing solutions in Section VIII. Finally, Section IX concludes this paper.

\section{Related Work} 
\label{sec: related work}
The most relevant existing works concerning CBS deployment are depicted in Table \ref{tab:studies overview}. The main supported assumptions that we consider to categorize these approaches are: (i) the placement paradigm to highlight the potential support of partial CBS placement; (ii) the legacy support to enable the use of legacy Ethernet devices, e.g., switches (SW) or End-systems (ES)); (iii) the configuration strategy guaranteeing the schedulability of shaped traffic and/or non-shaped traffic.

\begin{table}[b]
    \centering
    \begin{tabular}{>{\centering\arraybackslash}m{0.15\linewidth}|>{\centering\arraybackslash}m{0.05\linewidth}>{\centering\arraybackslash}m{0.1\linewidth}|>{\centering\arraybackslash}m{0.1\linewidth}>{\centering\arraybackslash}m{0.15\linewidth}|>{\centering\arraybackslash}m{0.05\linewidth}>{\centering\arraybackslash}m{0.05\linewidth}} 
         &  \multicolumn{2}{c|}{\textbf{Placement}} & \multicolumn{2}{c|}{\textbf{Configuration}} & \multicolumn{2}{c}{\textbf{Legacy}} \\ 
         & \multicolumn{2}{c|}{\textbf{Paradigm}} & \multicolumn{2}{c|}{\textbf{Strategy}} & \multicolumn{2}{c}{\textbf{Support}} \\ \cline{1-7}
         \multicolumn{1}{c|}{\textit{Study}} & \textit{Full CBS} & \textit{Partial CBS} & \textit{Ensure schedulability for shaped flows} & \textit{Ensure schedulability for non-shaped flows} & \textit{SW} & \textit{ES} \\ \hline \hline
         Diemer et al. \cite{diemer_formal_2012} & \cmark & \xmark & \cmark & \cmark & \xmark & \xmark \\ \hline 
         Ashjaei et al. \cite{ashjaei_schedulability_2017} & \cmark & \xmark & \cmark & \xmark & \xmark & \xmark \\ \hline 
         Cao et al. \cite{cao_independent_2018} & \cmark & \xmark & \cmark & \xmark & \xmark & \xmark \\ \hline 
         Maile et al. \cite{maile_delay-guaranteeing_2022} & \xmark & \cmark & \cmark & \xmark & \xmark & \cmark \\ \hline 
         Zhao et al. \cite{zhao_minimum_2024} & \cmark & \xmark & \cmark & \xmark & \xmark & \xmark \\ \hline 
         This work & \cmark & \cmark & \cmark & \cmark & \cmark & \cmark \\ \hline
    \end{tabular}
     \caption{Related Works Overview}
     \label{tab:studies overview}
\end{table}

Most of these works consider \textit{Full CBS deployment} within the network and no \textit{legacy support} of Ethernet Switches (SW) and End Systems (ES), except in \cite{maile_delay-guaranteeing_2022} where authors consider the use of legacy ES not implementing CBS mechanism. These  assumptions are accounted in this present work to enable the use of legacy devices and of CBS only when it is needed to significantly decrease the deployment costs. Moreover, most works tend to focus on finding the CBS configuration that enables schedulability of shaped classes without considering the impact on non-shaped classes, except in \cite{diemer_formal_2012} that considers schedulability constraints of both classes. This shows a clear void in the literature that we aim to fill in our proposal.

The primary concern with CBS configuration is efficiently reserving bandwidth to ensure flow schedulability. According to IEEE Std 802.1Qav \cite{Qav}, the IdleSlope should be set as the bandwidth requirement of the class subject to CBS.  

Diemer et al. \cite{diemer_formal_2012} argue that this method of bandwidth reservation may lead to unscheduled shaped flows. They propose an over-reservation approach, which involves multiplying the IdleSlope specified in \cite{Qat} by an arbitrary factor. They also highlight the potential blocking effect of CBS, which can significantly increase latency for prioritized traffic, emphasizing the crucial role of IdleSlope selection. Ashjaei et al. \cite{ashjaei_schedulability_2017} use an analytical approach based on busy period analysis to compute an IdleSlope that ensures the schedulability of shaped traffic, integrated with scheduled traffic mechanisms from \cite{8021QStd} Cao et al. \cite{cao_independent_2018} propose an approach to compute bandwidth reservation which is independent from interference caused by other traffic classes. This method has shown to be more precise for single-switch scenarios. Maile et al. \cite{maile_delay-guaranteeing_2022} and Zhao et al. \cite{zhao_minimum_2024} utilize CBS Network Calculus \cite{le_boudec_network_2001} models from \cite{de_azua_complete_2014, mohammadpour_end--end_2018, zhao_improving_2019} to derive IdleSlopes for CBS deployed in complex networks with multiple switches implementing CBS. Both approaches use a method that derives the necessary Service Curve to schedule network flows and, from this Service Curve, compute an appropriate IdleSlope for configuring CBS. Maile et al. adjust flow paths if resources are insufficient for reservation, while Zhao et al. compute local delays using the Service Curves per output port.  Nitta et al. \cite{nitta_proposal_2022} propose a solution based on simulation models, where they introduce a constraint-based linear optimization model to compute suitable IdleSlope values based on end-to-end delays for shaped flows.

In our previous work \cite{torres-borda_towards_2024}, we detailed an illustrative example that highlights the blocking effect caused by CBS, as initially discussed in \cite{diemer_formal_2012}, and how using CBS selectively on specific output ports could mitigate this blocking effect. This focus on partial CBS deployment broadened the network's capabilities without requiring full CBS support across all switches. This paper is a continuation of our previous work providing a detailed cost-efficient CBS deployment framework and the analyses of its benefits on an automotive use case.

\section{System Model}
A time-sensitive network consists of end-systems (ES) and switches (SW), also called devices $dev\in DEV$, that are interconnected via Full-Duplex links with a transmission capacity $C$. An example of such network is illustrated in Fig \ref{fig:RepresentativeAvionicNetworkArchitecture}. Our network model is based on the underlying directed graph $Net=(OP, E)$, where $OP$ represents the set of output ports and $E$ represents the set of unidirectional links between a source and destination output ports. 

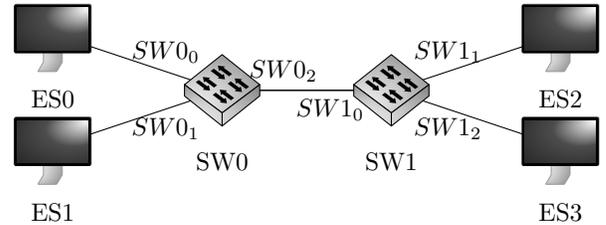
\begin{figure}[h!]
    \centering
    \begin{tikzpicture}[scale=0.75]
        \node[client] (client0) at (-5,0) {ES0};
        \node[client] (client1) at (-5,-2) {ES1};
        
        \node[switch] (switch0) at (-2,-1) {SW0};
        \node[switch] (switch1) at (1,-1) {SW1};
        
        \node[client] (client2) at (4,0) {ES2};
        \node[client] (client3) at (4,-2) {ES3};
        
        \draw (client0) -- (switch0) node[near end, above, xshift=0pt] {$SW0_0$};
        \draw (client1) -- (switch0) node[near end, below, xshift=0pt] {$SW0_1$};
        \draw (switch0) -- (switch1) node[near start, above, xshift=0pt] {$SW0_2$};
        \draw (switch0) -- (switch1) node[near end, below, xshift=0pt] {$SW1_0$};
        \draw (switch1) -- (client2) node[near start, above, xshift=0pt] {$SW1_1$};
        \draw (switch1) -- (client3) node[near start, below, xshift=0pt] {$SW1_2$};
        
    \end{tikzpicture}
    \vspace{10pt}
    \caption{Example of Time-sensitive Network Architecture}
    \label{fig:RepresentativeAvionicNetworkArchitecture}
\end{figure}

Each output port in $OP$ supports a NP-SP scheduler and eight priorities queues $P=\{0,1,2,3,4,5,6,7\}$, where $p=0$ represents the highest priority and $p=7$ represents the lowest one. Moreover, each output port has a device attribute $dev\in DEV$, denoted as $op.dev$. For each priority queue in $op$, we can enable (if needed) the CBS mechanism on top of NP-SP. To this effect, we use $CBS_{op}^p$ a binary variable equal to $1$ if there is CBS within $op$ for priority $p$. Each CBS-shaped queue with priority $p$ within output port $op$ has a reserved bandwidth called IdleSlope, denoted as $I_{op}^p$. The use of CBS within end-systems is optional in our approach, as we consider them as legacy devices that do not support complex queuing algorithms. Nevertheless, the approach presented in this paper is still valid if they are considered for CBS deployment.

For each flow $f \in \mathcal{F}$ traversing the network, we define the tuple $(r^f, b^f, L^f, D^f, p^f, \phi^f)$ for the flow's rate, burst, maximum frame size, deadline, assigned priority, and path, respectively. Specifically, for each flow $f$, its path $\phi^f$ is an ordered collection of crossed output ports from its source until its destination. The paths are known a priori and may have cyclic dependencies. We call $\mathcal{F}(p)$ the set of flows with priority $p$, $\mathcal{F}(op,p)$ those additionally crossing the output port $op$, and $\mathcal{F}(dev)$ those traversing device $dev$.

The main notations used in this paper are in Table \ref{tab:variable_definitions}, where upper indices indicate flows or a traffic priority and lower indices indicate a node or a set of nodes.
\begin{table}[h!]
    \caption{Notations}
   \label{tab:variable_definitions}
    \centering
   \small
   \begin{tabular}{|c|m{0.7\columnwidth}|}
       \hline
       $OP$ & The set of output ports in the network \\ 
       $E$ & The set of unidirectional links in the network \\ 
      $op$ & An output port within the set $OP$. \\ 
       $C$ & The transmission capacity of a link \\ 
       $op.dev$ & The device to which $op$ is associated \\ 
     $DEV$ & The set of devices, including switches ($SW$) and End systems ($ES$) \\ 
     $OP(dev)$ & the set of output ports of device $dev$ \\
       $r^f$ & The rate of flow $f$ \\
     $b^f$ & The burst size of flow $f$ \\
        $L^{f}$ & The maximum frame size of flow $f$ \\
       $D^f$ & The deadline of flow $f$ \\
      $p^f$ & The assigned priority of flow $f$. \\
      $\phi^f$ & The path of flow $f$ \\
       $L^{>p,max}$ & The maximum frame size among flows with priority $> p$ \\
       $I_{op}^p$ & The IdleSlope for a CBS at output port $op$ for priority $p$ \\
       $CBS_{op}^p$ & Binary variable indicating the presence of a CBS at $op$ for priority $p$ \\
      $\mathcal{F}(p)$ & the set of flows with priority $p$ \\
      $\mathcal{F}(op,p)$ & the set of flows with priority $p$ crossing the output port $op$ \\ 
      $\mathcal{F}(dev)$ & the set of flows traversing device $dev$ \\\hline
   \end{tabular}
\end{table}

\section{Framework Overview}
\label{sec:overview}
Our cost-efficient CBS deployment framework, illustrated in Fig.\ref{fig:framework_overview}, considers as input a time-sensitive network implementing only legacy NP-SP schedulers within all the output ports and having some schedulability issues that do not concern highest-priority flows. This particular assumption takes into account industrial feedback on the interest of using TSN standards only when the legacy IEEE 802.1Q standard is not sufficient to achieve the QoS requirements, due to the high deployment costs of TSN devices. The aim of our proposed framework is to generate an enriched network configuration with selectively placed CBS to improve schedulability while reducing costs.

\begin{figure}[h!]
\centering
\includegraphics[width=0.45\textwidth]{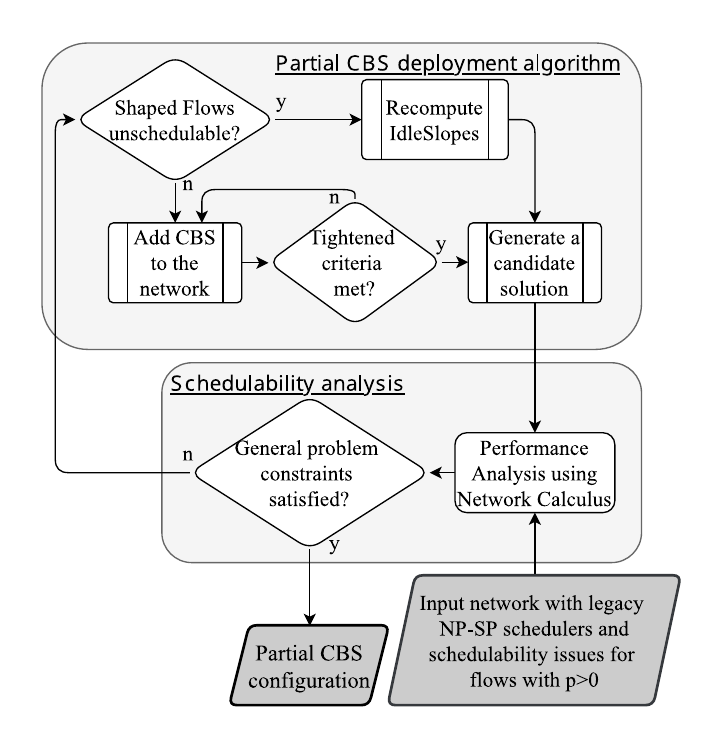}
\caption{Cost-efficient CBS deployment framework}
\label{fig:framework_overview}
\end{figure}
Our framework is based on two main steps:

(i) First, we verify the schedulability of the input configuration and identify the non schedulable flows (if any) based on Network Calculus (NC) \cite{le_boudec_network_2001}. The high modularity of such a theory and its applicability to networks with cyclic dependencies \cite{Thomas} make it particularly efficient for time-sensitive networks. In addition, the most relevant works concerning performance analysis of CBS mechanisms are based on NC \cite{de_azua_complete_2014,mohammadpour_end--end_2018,zhao_improving_2019}. This step is conducted in an iterative manner each time the network configuration is updated with new CBS devices until reaching a schedulable network configuration (if any). This step will be detailed in Section IV;

(ii) Second, we need to conduct the partial CBS deployment algorithm to place CBS selectively within some devices improving schedulability where it is most needed and avoiding the high cost of full CBS deployment. This step is based on the formulation of an optimization problem and its solving using a heuristic approach to cluster CBS within a minimum number of devices and configure their assigned bandwidths. It is worth noting that this step distinguishes two cases: (i)  when there are non-schedulable shaped priorities, it starts by updating the assigned bandwidths of concerned CBS to improve the schedulability of the shaped flows; (ii) when there are only non-shaped priorities with schedulability issues, it generates a CBS placement and configuration of assigned bandwidths. At the end of this step, the schedulability verification module is invoked to verify the performance of the generated CBS configuration. This step will be detailed in Sec. V and VI. 

As shown in Fig.\ref{fig:framework_overview}, there is a feedback loop between CBS deployment and schedulability analysis until meeting all the timing constraints (if a solution exists). The generated configuration with the CBS placement is then considered feasible and the process is stopped.

\section{Schedulability Verification Method}
\label{sec:schedulability}
We present in this section the used schedulability verification method based on Network Calculus (NC) \cite{le_boudec_network_2001}. We first present the Network Calculus framework and the main results for CBS mechanism. Then, we define the considered schedulability condition. Finally, we explain the computation of the upper bounds on end-to-end delays.

\subsection{Network Calculus Framework}
Network Calculus is a mathematical framework formally presented in \cite{le_boudec_network_2001}. It involves the concept of Arrival Curves ($\alpha(t)$) and Service Curves ($\beta(t)$) to describe the bounds on the amount of input data into a network element and the service offered by said element, respectively.  

In particular, for our network, we characterize a flow $f$ Arrival Curve by the leaky-bucket form, $\lambda_{b,r}(t)$:
\begin{equation}
    \alpha^f(t) = r^f\cdot t + b^f
\end{equation}

Moreover, we characterize the Service Curve offered by an output port $op$ to a traffic priority $p$ as a Rate-Latency curve form, $\beta_{R,T}(t)$:
\begin{equation}
\beta_{op}^p(t) = R_{op}^p[t-T_{op}^p]^+
\end{equation}

The delay and backlog bounds at a node are determined by the maximum horizontal and vertical distances between these curves. The calculus of these bounds is greatly simplified in the case of leaky bucket arrival curve $\lambda_{b,r}(t)$ and the Rate-Latency service curve $\beta_{R,T}(t)$. In this case, the delay is bounded by $\frac{b}{R} + T$ and the backlog bound is $ b + r*T$.
When considering the legacy NP-SP scheduler, each output port offers a rate-latency service curve to a given traffic priority computed through Cor. \ref{cor:residual-service-curve}.
\begin{Corollary}(Left-over service curve - NP-SP Multiplexing)\cite{bouillard_deterministic_2018}
\label{cor:residual-service-curve}
Consider a system with the strict service curve $\beta$ and $m$ flows crossing it, $f_1$,$f_2$,..,$f_m$. The maximum packet length of $f_i$ is $L^{i,max}$ and $f_i$ is $\alpha_i$-constrained. The flows are scheduled by the NP-SP policy, where priority of $f_i > $  priority of $f_j \Leftrightarrow i < j$. For each $i \in \{2,..,m \}$, the strict service curve of $f_i$ is given by\footnote{$g_{\uparrow}(t) = \max\{0,\sup_{0 \leq s \leq t} g(s)\}$}:

$$ (\beta - \sum_{j <i} \alpha_j - \max_{k \geq i} L^{k,max})_{\uparrow}$$		
\end{Corollary}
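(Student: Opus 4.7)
The plan is to argue via a backlogged period for $f_i$ and account exactly for what the server does during that period. Fix an arbitrary time $t$ and let $s := \sup\{u \leq t : Q_i(u) = 0\}$ be the start of the current backlogged period of $f_i$ (where $Q_i$ denotes the backlog of $f_i$). Since $\beta$ is a \emph{strict} service curve, the total amount of data (across \emph{all} flows) served during $(s,t]$ is at least $\beta(t-s)$. The core step is to decompose this total output into three parts: (a) the service delivered to higher-priority flows $f_1,\dots,f_{i-1}$, (b) the service delivered to one possibly-blocking lower-priority packet that began transmission just before $s$, and (c) the service delivered to $f_i$ itself, which is what we want to lower-bound.

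For part (a), I would use the fact that flows $f_j$ with $j<i$ are $\alpha_j$-constrained; since every bit of $f_j$ served in $(s,t]$ must have arrived in $(s,t]$ (or been backlogged at $s$, but a cleaner argument uses that arrivals of higher-priority flows during $(s,t]$ upper-bound their service by $\alpha_j(t-s)$, using that $f_j$ cannot have more data served than arrived during the interval plus what was queued, and all queued bits for $j<i$ were arrived earlier and are themselves arrival-curve constrained). The standard tightening is to choose $s$ as the start of a joint backlogged period so that all higher-priority arrivals served in $(s,t]$ arrived in $(s,t]$, giving a bound of $\sum_{j<i}\alpha_j(t-s)$. For part (b), by NP-SP semantics, at most one packet from a priority $\geq i$ class can be "in service" at time $s$, and its remaining bits served in $(s,t]$ cannot exceed $\max_{k \geq i} L^{k,\max}$. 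Subtracting (a) and (b) from $\beta(t-s)$ gives a lower bound on the service to $f_i$ during $(s,t]$.

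Since $f_i$ was non-backlogged at $s$, the cumulative-service difference $R^{*,f_i}(t)-R^{*,f_i}(s)$ coincides with the service delivered to $f_i$ during the backlogged period, hence is at least $\beta(t-s) - \sum_{j<i}\alpha_j(t-s) - \max_{k\geq i} L^{k,\max}$. This establishes the strict service curve property for the raw expression; applying the non-decreasing positive closure $(\cdot)_{\uparrow}$ is then just a cosmetic cleanup to obtain a valid (non-negative, non-decreasing) curve, which is legitimate because a strict service curve inequality is preserved under replacement by any smaller non-decreasing function.

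The main obstacle I expect is the careful handling of part (a): ensuring that the arrival-curve bound $\alpha_j(t-s)$ applies to the \emph{service} given to $f_j$ in $(s,t]$ and not merely to \emph{arrivals}. The standard remedy is to redefine $s$ as the start of the joint backlogged period of all priorities $\leq i$, so that the higher-priority queues are also empty at $s$, after which the inequality follows directly. A minor subtlety is the $k\geq i$ (rather than $k>i$) in the blocking term; this is a safe over-approximation that covers the edge case where the packet in service at $s$ itself belongs to $f_i$ but started before $s$ due to a prior empty-queue condition, and it simplifies the bookkeeping without affecting correctness.
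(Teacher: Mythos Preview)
The paper does not actually prove Corollary~\ref{cor:residual-service-curve}; it is quoted from \cite{bouillard_deterministic_2018} as a known Network Calculus result and used as a black box, so there is no ``paper's own proof'' to compare against. Your plan is precisely the standard derivation found in that reference: take a backlogged interval for $f_i$, use the strict service curve $\beta$ to lower-bound the total output, subtract the higher-priority interference (bounded via the arrival curves $\alpha_j$) and the single non-preemptive blocking packet, then apply the closure $(\cdot)_\uparrow$. You have also correctly identified the one genuine subtlety --- that bounding higher-priority \emph{service} by $\sum_{j<i}\alpha_j(t-s)$ requires the higher-priority queues to be empty at the left endpoint --- and the standard remedy of moving the left endpoint back to the start of the backlogged period of the aggregate $\{f_1,\dots,f_i\}$.

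One small caution on that remedy: shifting to $s'\le s$ yields the desired lower bound on $(s',t]$, but the \emph{strict} service-curve property must hold on every backlogged sub-interval $(s,t]$ of $f_i$, not only on those starting at $s'$. Closing that gap is where a bit of extra bookkeeping enters in the cited proof, and it is also the place where the conservative $\max_{k\ge i}L^{k,\max}$ (rather than $k>i$) earns its keep: the packet in service at $s'$ may itself belong to $f_i$, and absorbing that case into the blocking term keeps the bound uniform. Your informal justification for the $k\ge i$ term is therefore not quite the right mechanism, though the conclusion is correct and the overall plan is sound.
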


If the output port $op$ implements CBS for a given priority $p$ on top of NP-SP, then the provided service Curve is defined as \cite{mohammadpour_end--end_2018, zhao_improving_2019}:
\begin{equation}\label{eq: CBS Service Curve}
      \beta_{op}^p(t) = I_{op}^p\Big[t-\frac{cred^{p,max}_{op}}{I_{op}^p}\Big]^+
\end{equation}
Where $cred^{p,max}_{op}$ corresponds to the credit upper bound for the traffic-priority $p$ within the output port $op$:
\begin{equation}
cred_{op}^{p,max} = I_{op}^p\cdot \frac{\sum_{k=0}^{p-1}cred_{op}^{k,min}-L^{>p,max}}{\sum_{k=0}^{p-1}I_{op}^k-C}
\end{equation}

Here $cred^{p,min}_{op}$ corresponds to the credit lower bound for the traffic-priority $p$ within the output port $op$:
\begin{equation}
cred_{op}^{p,min} = (I_{op}^p-C)\cdot \frac{L^{p}}{C}
\end{equation}

\subsection{Sufficient Schedulability Condition}
\label{scheduTest}
To infer the schedulability analysis of our network configuration with partial CBS deployment, we define a sufficient schedulability condition, which consists in verifying that the end-to-end delay bound of each traffic flow computed with NC is lower than its deadline. 
	
The end-to-end delay bound expression of a flow $f$ with priority $p$, along its path $\phi^f$ is as follows:
\begin{equation}\label{eq: TFA}
    d^f = d_{src}^p + d_{prop}+\sum_{\forall op\in \phi^{f}\setminus src} d_{op}^p
\end{equation}

With $d_{src}^p$ the delay bound within the end-system ($src$) to transmit the aggregate traffic with priority $p$ and $d_{prop}$ the propagation delay along the path, which is generally negligible. The last delay $d_{op}^p$ represents the delay bound within each output port $op$ along the flow path and it is mainly due to the multiplexing delay of CBS and/or NP-SP scheduler. To enable the computation of upper bounds on these delays, we need a computation algorithm taking into account cyclic dependencies.

\subsection{Computing End-to-end Delay bounds}
\label{FP-TFA++}
Network Calculus-based algorithms have been developed to apply its principles to compute delay bounds at a network scale. The one used in this paper is the extended version of Total Flow Analysis (TFA++)\cite{mifdaoui_beyond_2017} taking into account cyclic dependencies in addition to line shaping effect, proposed in \cite{Thomas} and named Fixed-Point TFA++ (FP-TFA++). 
TFA++ is based on a three-step process: (i) Summing the arrival curves of individual flows of the same class that share the same input at each crossed output port; (ii) computing the delay bound, which considers the aggregate arrival curve of the flows and the global service curve guaranteed to a specific class within a given output port; (iii) determining the output arrival curve of each flow taking into account the line shaping effect using Equation \ref{eq: output arrival} for any flow $f$ with priority $p$ traversing $op$ with transmission capacity $C_{op}$:
\begin{equation}\label{eq: output arrival}
    \alpha^{*,p}_{op}(t) = min\left(C_{op}\cdot t,\alpha_{op}^p(t+d_{op}^p)\right)
\end{equation}

Where $d_{op}^p$ corresponds to the delay bound computed in step (ii).

For a network with cyclic dependencies, the main idea is to virtually perform cuts in the network graph so as to make it acyclic. Cuts can be selected freely as long as the remaining network is feed-forward. Afterwards, FP-TFA++ iterates the TFA++ algorithm on the remaining feed-forward network starting with an initial value of flow bursts until reaching a fixed point on the bursts. When this point is reached, the delay bounds can be computed as in step (ii) of TFA++ algorithm. This step of our framework is based on the computation tool WoPANets \cite{mifdaoui_wopanets_2010}.

\section{Partial CBS Deployment as an Optimization Problem}
\label{sec:deployment}
\subsection{Problem Formulation}\label{sec: Problem Definition}
The problem we aim to solve is a placement problem. Specifically, we have to select where to place CBS based on network context, i.e. flow attributes, flow end-to-end delay, flow path, already placed CBS, etc. Once placed, we also have to set an IdleSlope for each placed CBS so the purpose of guaranteeing schedulability is achieved, in addition to reducing the number of TSN devices throughout the network.

To master the clustering of CBS within the minimum of devices, we introduce the variable $TSN_{dev}$ which represents if a device implements CBS functionality. This is a binary variable that takes 1 if at least a CBS is implemented in $dev$ and 0 if none are implemented. 

\begin{equation}\label{eq: objective equation}
    TSN_{dev} = \begin{cases}
    1  & \text{if } \sum_{\forall op,p \mid op \in OP(dev)}CBS_{op}^{p}\geq 1\\
    0  & \text{if } \sum_{\forall op,p  \mid op \in OP(dev)}CBS_{op}^{p}< 1
  \end{cases}
\end{equation}

The optimization problem that we aim to solve is formulated as follows:
\paragraph*{Objective}
\begin{align}
    \text{minimize}\sum_{\forall dev \in DEV} TSN_{dev}
\end{align}

\paragraph*{Subject to}
\begin{align}
    &\forall dev \in DEV, \forall op \in OP(dev),\nonumber \\
    &\forall p \in P, \forall f \in \mathcal{F}, \nonumber\\
    & \sum_{\forall p \mid CBS_{op}^{p}=1} I_{op}^{p} \leq 0.75\cdot C  \tag{9a} \label{eq:9a}\\
    & \forall p \mid CBS_{op}^{p}=1, I_{op}^{p} \geq r^p  \tag{9b} \label{eq:9b}\\
    & CBS_{op}^{p} \geq CBS_{op}^{p+1}  \tag{9c} \label{eq:9c}\\
    & d^f \leq D^f  \tag{9d} \label{eq:9d} \\
    & CBS_{op}^{p} \in \{0,1\} \tag{9e} \label{eq:9e}
\end{align}
where,
\begin{itemize}
\item The objective is to minimize the number of TSN devices deployed in the network;
\item Constraint (9a) guarantees that the sum of all IdleSlopes of CBS implemented in an output port is lower than $75\%$ of the output's bandwidth, as defined in the standard \cite{8021QStd};
\item Constraint (9b) guarantees that the assigned IdleSlope of a CBS for the shaped priority respects the stability condition of the priority level;
\item Constraint (9c) guarantees that any given priority where CBS is implemented have a higher priority than any traffic classes not implementing it, as intended by the standard \cite{8021QStd} in \textit{note 5} of section; \textit{8.6.8.2 Credit-Based Shaper Algorithm};
\item Constraint (9d) guarantees that the schedulability condition is verified for all the flows;
\item Constraints (9e) guarantees that $CBS_{op}^{p}$ is a binary variable.
\end{itemize}

All the constraints have to be guaranteed for any device, any output port, any priority and any flow in the network.

As presented, this problem corresponds to a Non-Linear Problem (NLP) that is commonly known to be hard to solve. The main complexity is due to Equation $(9d)$ relying on an oracle's verification, implementing FP-TFA++ algorithm and its recursive nature, which is time consuming. To overcome these issues, we address the problem through relaxing it using linearization and constraints tightening techniques that will be detailed in the next section. 

\subsection{Problem Linearization \& Constraint tightening}
\label{sec: problem linearization and constraint tightening}
We start by linearizing the objective function. This is enabled through adding linear constraints to simplify the definition of $TSN_{dev}$ binary variable in Eq. (\ref{eq: objective equation}) as follows:
\begin{align}
\label{Eq:linear-objective}
    & \forall dev \in DEV,\forall (op,p) \,|\, op \in OP(dev), \nonumber \\
    & TSN_{dev} \geq CBS^{p}_{op} \nonumber \\
    & CBS_{op}^{p} \in \{0,1\} \nonumber \\
    & TSN_{dev} \in \{0,1\}
\end{align}

Afterwards, during the CBS deployment step of our proposed framework (see Fig. \ref{fig:framework_overview}), we tighten the schedulability constraint in Eq. (9d). First, we remove the recursive computation of the delay bound due to FP-TFA++ algorithm through ignoring the burst propagation phenomena along the flow path. Then, we focus only on guaranteeing the shaped flows schedulability when placing and configuring CBS, since the verification of all the schedulability constraints will be conducted during the schedulability verification step. Consequently, during the CBS deployment step, Eq (9d) is verified only for:\\
\[ \forall p \in P, \forall f \in \mathcal{F}(p)  \,|\, \exists op, CBS_{op}^p =1\]

Assuming the input network satisfies the initial condition where all flows of priority \( p=0 \) are schedulable, our goal is to maintain their schedulability while improving the performance of medium and lower priority flows (\( p>0 \)). Since implementing CBS reduces the burstiness of shaped flows and consequently improves the schedulability of non-shaped flows, assigning the minimum IdleSlope that guarantees the schedulability of shaped flows will offer the best improvement for non-shaped flows. Hence, the idea is to define a minimum IdleSlope for each placed CBS while guaranteeing the schedulability of shaped flows.

To address this, we start by defining a local deadline for each output port $op$ and each priority level $p$ when $CBS_{op}^p =1$.

\textbf{Definition:} For each output port $ op $ and each priority level $ p $ where $ CBS_{op}^p = 1 $, there exists a local deadline $ D_{op}^p $ defined as:
\begin{align}\label{eq: local deadline}
    &\forall (op,p)  \,|\, CBS_{op}^p =1,\nonumber \\
    & D_{op}^p = \min_{f \in \mathcal{F}(op,p)} \left[(D_f- d_{src}^p-d_{prop}) \cdot \frac{r_{op}^{p}}{\sum_{\forall op' \in \phi_f \backslash src} r_{op'}^p}\right]
\end{align}
This local deadline is computed for each output port $op$ and priority $p$ when considering the most constrained shaped flow $\in \mathcal{F}(op,p)$ for which: (i)the end-to-end delay bound in (Eq. \ref{eq: TFA}) is lower than its assigned deadline; (ii) its deadline is distributed proportionally to the utilization rate of crossed output ports.
\begin{proof}
Starting from Equation 9d, by substituting Equation \ref{eq: TFA} into Equation 9d, we obtain for any $f \in \mathcal{F}(op,p)$
    \[
    \sum_{\forall op \in \phi^f\setminus src} d_{op}^{p} \leq D^f - d_{src}^{p} - d_{prop}
    \]

    There exists a local deadline \( D_{op}^{p} \) such that:

    \begin{align}
\label{eq: local-deadline-bis}
\sum_{\forall op \in \phi^f\setminus src} d_{op}^{p} & \leq \sum_{\forall op \in \phi^f\setminus src} D_{op}^{p} \nonumber \\
     & \leq D^f - d_{src}^{p} - d_{prop} \nonumber \\
     & \leq \min_{f \in \mathcal{F}(op,p)} \left[D^f - d_{src}^{p} - d_{prop}\right]
\end{align}

    Consider \( D_{op}^{p} \) as:
     \[
    D_{op}^p = \min_{f \in \mathcal{F}(op,p)} \left[(D_f - d_{src}^p - d_{prop}) \cdot \frac{r_{op}^{p}}{\sum_{\forall op' \in \phi_f \backslash src} r_{op'}^p}\right]
    \]

    Summing \( D_{op}^{p} \) over all output ports in the path of any flow $f \in \mathcal{F}(op,p)$ (excluding the source) verifies Eq. \ref{eq: local-deadline-bis}.
    This completes the proof.
\end{proof}

Afterwards, we ignore the burst propagation phenomena when computing the delay bound of a shaped flow within a crossed output port implementing CBS. This is to discard the recursive computation of the delay bound due to FP-TFA++ Algorithm. Here, the local schedulability condition becomes:
\begin{align}\label{eq: idleSlope computation}
&\forall p,  \,|\, \exists op, CBS_{op}^p =1,\nonumber \\
&  \frac{\sum_{f \in \mathcal{F}(op,p) } b^{f}}{I_{op}^{p}} + T_{op}^{p} &\leq D_{op}^p \cdot \text{margin}
\end{align}

Where variable $margin \in [0,1]$ is introduced to compensate the optimism of the computed delay bound when ignoring the burst propagation phenomena with reference to the delay bound commonly computed using NC.

From Eq. (\ref{eq: idleSlope computation}), we derive the minimum IdleSlope assigned to a CBS for a priority $p$ within the output port $op$:
\begin{equation}\label{eq: minimum idleslope}
    I^{p}_{op} \geq \frac{\sum_{f \in \mathcal{F}(op,p) } b^{f}}{D^{p}_{op} \cdot \text{margin} - T_{op}^{p}} =I^{p,min}_{op}
\end{equation}

where, $T_{op}^{p}$ corresponds to the latency factor from $\beta^{p}_{op}(t)$ in Eq. (\ref{eq: CBS Service Curve}):
\begin{equation}\label{eq: latency of the service curve}
    T_{op}^{p} = \frac{\sum_{k=0}^{p-1}cred_{op}^{k,min}-L^{>p,max}}{\sum_{k=0}^{p-1}I_{op}^{k}-C}
\end{equation}
Eq. (\ref{eq: latency of the service curve}) shows that to calculate a minimum IdleSlope using Eq. (\ref{eq: minimum idleslope}), the IdleSlopes of higher-priority CBS within the same $op$ must be determined first. By adhering to Constraint (9c), we ensure that CBS are placed sequentially (starting with the highest priority), which simplifies the computation of Eq. (\ref{eq: minimum idleslope}), as there are no unknown variables upon computation.

The final simplified optimization problem is as follows:

\paragraph*{Objective}
\begin{equation}
    {\text{minimize}} \quad \sum_{dev \in DEV} TSN_{dev}
\end{equation}

\paragraph*{Constraints}
\begin{align}
& \sum_{\forall p \mid CBS_{op}^{p}=1} I_{op}^{p} \leq 0.75\cdot C , \quad \forall dev, \forall (op,p) \nonumber\\
&  I_{op}^{p} \geq r^p , \quad \forall (op,p) \mid CBS_{op}^{p}=1 \nonumber\\
& CBS_{op}^{p} \geq CBS_{op}^{p+1} ,\quad \forall dev, \forall (op,p) \nonumber\\
& TSN_{dev} \geq CBS^{p}_{op} , \quad \forall dev, \forall (op,p) \nonumber \\
& I^{p}_{op} \geq \frac{\sum_{f \in \mathcal{F}(op,p) } b^{f}}{D^{p}_{op} \cdot \text{margin} - T_{op}^{p}}, \quad \forall dev, \forall (op,p) \nonumber\\
& CBS_{op}^{p} \in \{0,1\}, TSN_{dev} \in \{0,1\}, \quad \forall dev,\forall (op,p) \nonumber
\end{align}

This formulation provides an Integer Linear Programming problem (ILP). However, in time-sensitive networks with $>10$ devices and $>1000$ flows, the number of constraints to produce a feasible solution can exceed 10000, this is computationally inefficient for ILP solvers \cite{noauthor_tolerances_nodate}. Therefore, we choose to address this problem with a custom heuristic approach that is detailed in the next section.

\section{Cost-aware Heuristic Approach for CBS deployment}
\subsection{Heuristic Algorithm}
Our heuristic approach to place CBS is illustrated through Algorithm \ref{alg:partial_cbs_function} that operates under two primary scenarios. The first addresses cases where shaped flows are not schedulable (referred to as \texttt{unschedShaped} in the algorithm). In such cases, a reconfiguration is triggered (Lines 2--10) to adjust overestimated IdleSlope values from a prior execution. Initially, the margin $m$ is updated (Line 4) using the constant $\rho$ as follows:  
\begin{equation}
m = m - \rho.
\end{equation}  
After updating the margin, its value is compared to the minimum theoretical margin in Lines 5--7. If the new margin is less than or equal to this limit, the algorithm concludes that no feasible solution can be found and terminates (Line 6). The minimum theoretical margin, is calculated as follows:  
\begin{equation}
\max\left(\frac{d_{op}^{p}}{D^{p}_{op}}\right).
\end{equation}  

where $D^{p}_{op}$ is computed based on Eq. (\ref{eq: local deadline}).

In Line 8, the \texttt{recomputeIdleSlopes} function recalculates the IdleSlope values for previously placed CBS using the updated margin. The resulting network configuration is then returned for schedulability verification (Line 9).  

In contrast, if all shaped flows are schedulable under the current configuration, but non shaped flows (denoted in the algorithm as \texttt{unschedNonShaped}) remain unschedulable, the algorithm proceeds to the CBS placement phase (Lines 11--35). In this phase, it attempts to deploy CBS in a way that benefits all flows in the network. This process builds upon the existing configuration by keeping previously placed CBS unchanged while adding new ones.  

\begin{algorithm}
\caption{Partial CBS Algorithm}
\label{alg:partial_cbs_function}
\begin{algorithmic}[1]
    \Function{partialCBSDeployment}{unschedShaped, unschedNonShaped,m,$\rho$}
        \If{unschedShaped $\neq \emptyset$}
            \State // Perform Reconfiguration
            \State m $\gets$ m$-\rho$
            \If{m $\leq m_{min}$}
                \State \textbf{return} null \Comment{No solution could be found}
            \EndIf
            \State recomputeIdleSlopes(m)
            \State \textbf{return} \textit{network configuration}
        \EndIf
        \State devsToExclude $\gets \emptyset$
        \State shapedFlows $\gets \emptyset$
        \While{unschedNonShaped $\neq \emptyset$}
            \State // Place CBS
            \State foi $\gets$ mostConstrainedFlow(unschedNonShaped)
            \State \textbf{device selection:}
            \State dev $\gets$ getDev(foi, devsToExclude)
            \State ops $\gets$ getOrderedOps(dev)
            \For{\textbf{all} op $\in$ ops}
                \State p $\gets$ highestNonShapedPriority(op)
                \State $I_{op}^{p} \gets$ computeIdleSlope(op,$f_{verif}$)
                \If{$\sum_{\forall p}I_{op}^{p} \leq 0.75\cdot C$}
                    \State $CBS_{op}^{p} \gets 1$
                    \State shapedFlows.add(\(\{ f \mid f \in \mathcal{F}(op,p)) \}\))
                \ElsIf{$op\in \phi_{foi}$}
                    \State devsToExclude.add(dev)
                    \State \textbf{go to} \textit{device selection}
                \EndIf
            \EndFor
            \State DA $\gets$ getDirectlyAffectedFlows()
            \State IA $\gets$ getIndirectlyAffectedFlows()
            \State unschedNonShaped.remove($DA \cup IA$)
        \EndWhile
        \State \textbf{return} \textit{network configuration}
    \EndFunction
\end{algorithmic}
\end{algorithm}
The process begins with initializing the \texttt{devsToExclude} list (Line 11), which tracks devices excluded from CBS deployment when these do not fit the constraints from our problem. In Line 12 initializes \texttt{shapedFlows} which tracks the shaped flows issued from CBS placement during the present execution of the algorithm. It then enters a \texttt{while} loop (Line 13) that continues until all non-shaped unscheduled flows in \texttt{unschedNonShaped} are addressed. Initially, \texttt{unschedNonShaped} contains all non-shaped unscheduled flows identified by the verification tool.  

The first step involves identifying the most critical device for CBS placement. To this effect, the algorithm selects the most constrained unscheduled and non-shaped flow in the network, called \texttt{foi} (Line 15). This flow must have the highest priority level in \texttt{unschedNonShaped}. Specifically, this flow is obtained as:
\begin{equation}
foi = \arg\max_{f \in \texttt{unschedNonShaped}|p=p^{max}_{unschedNonShaped}} (d^f - D^f).
\end{equation}
Where $p^{max}_{unschedNonShaped}$ corresponds to the highest priority level flows inside \texttt{unschedNonShaped}.

Line 16 provides a reference point for re-selecting a device if needed. Line 17 identifies the device where \texttt{foi} first encounters higher-priority flows that are unshaped at that specific device, considering only devices where CBS deployment is still feasible. CBS deployment is considered possible as long as there are unshaped priority levels available for which only schedulable flows pass through. The algorithm then sorts the output ports on the selected device based on the most constrained flow per port. This is done on Line 18 and the result is placed on set $ops$.

The algorithm then iterates over these ordered ports (Lines 19--29) to place CBS. Line 21 computes the IdleSlope for ($op,p$)  using Eq. \ref{eq: minimum idleslope} by using the corresponding $f_{verif}$ from Eq. \ref{eq: local deadline} (the flow with the most constrained deadline). If this value satisfies the constraint (9a) (Line 22), a CBS is deployed at ($op,p$) in Line 23 and the \texttt{shapedFlows} list is updated with flows that are now shaped. If not, and it is the first $op$ to be explored on $dev$ (Line 25), the current device is excluded from further consideration (Line 26), and the algorithm returns to Line 16 to select a new device (Line 27).  In case the computed IdleSlope does not comply with constraint (9a), the $op$ is ignored for CBS placement and the remaining $op \in ops$ are explored. 

Once all ports on the device have been explored, the algorithm identifies two sets of flows: \textit
{DA (Directly Affected Flows)} for Unscheduled flows in \texttt{unschedNonShaped} traversing the current device (Line 30). DA is formally computed as:
\begin{equation}
    DA = \{f \mid f \in (unschedNonShaped\cap \mathcal{F}(dev))\},
\end{equation}
and \textit{IA (Indirectly Affected Flows)} for Unscheduled flows in \texttt{unschedNonShaped} that share paths with flows in DA or \texttt{shapedFlows}, but do not traverse the current device (Line 31). Formally, it is computed as:
\begin{align}
    &IA = \{  f \in \text{\texttt{unschedNonShaped}} \mid \nonumber\\ 
    &f \notin \mathcal{F}(\text{dev}) \land (\exists f' \in DA \cup \text{\texttt{shapedFlows}} \mid 
    \phi^f \cap \phi^{f'} \neq \emptyset )\} \nonumber\\  
\end{align}

Both sets are removed from \texttt{unschedNonShaped} (Line 32), since they may be impacted by the placed CBS. If \texttt{unschedNonShaped} becomes empty, the algorithm returns the network containing CBS placement and configuration as a potential solution (Line 34) to verify its schedulability.  

\subsection{Illustrative example}\label{sec:performance}
To illustrate how Algorithm \ref{alg:partial_cbs_function} works, we use the network on Figure \ref{fig:RepresentativeAvionicNetworkArchitecture}. It is traversed by flows from Table \ref{tab:flow_characteristics} which also contains the flow's characteristics. It has three priorities, for which the third is considered best effort. Table \ref{tab:input} gives the variables of interest for the algorithm's execution as well as the results computed after a valid solution has been found.

\subsection*{Step-by-Step Execution}x`
The initial set of unscheduled non-shaped flows is \(\{f_2, f_3, f_4\}\) and \texttt{margin} is set to $1$. Below, we outline the line-by-line results of the algorithm's execution:

\begin{enumerate}
    \item \textbf{Line 2}: unschedShaped = $\emptyset$
    \item \textbf{Line 13}: foi $\gets f_2$ .
    \item \textbf{Line 15}: dev $\gets SW0$ 
    \item \textbf{Line 16}: ops $\gets (SW0_2, SW0_1)$ 
    \item \textbf{Lines 19--29 (First iteration of the \texttt{for} loop)}:
    \begin{itemize}
        \item op $\gets SW0_2$
        \item p $\gets$ 0
        \item $I_{SW0_2}^{0} = 44460912$bps
        \begin{itemize}
            \item Where $f_{verif}=f_0$
        \end{itemize}
    \end{itemize}
    \item \textbf{Lines 19--29 (Second iteration of the \texttt{for} loop)}:
    \begin{itemize}
        \item op $\gets$\(SW0_1\)
        \item p$\gets$ 0
        \item $I_{SW0_1}^{0} = 34416827$ bps
        \begin{itemize}
            \item With $f_{verif}=f_1$
        \end{itemize}
    \end{itemize}
    \item \textbf{Lines 30--32}: 
    \begin{itemize}
        \item DA $\gets$ \(\{f_2, f_3\}\)
        \item IA $\gets$ \(\{f_4\}\)
        \item \texttt{unschedNonShaped.remove}($\{f_2, f_3,f_4\}$)
    \end{itemize}
    \item \textbf{Line 13}: The set \texttt{unschedNonShaped} is now empty, and the algorithm terminates.
\end{enumerate}

The schedulability of the proposed solution is verified and it is considered as a valid solution. The computed delay bounds under partial CBS are detailed in Table \ref{tab:input}. Consequently, the framework returns the enriched network configuration, including the CBS setup computed by the algorithm. This configuration successfully schedules all previously unscheduled flows while clustering all CBS onto a single device.
\begin{table}[t]
    \centering
    \caption{Flow Characteristics}
    \small
    \begin{tabular}{c|c|c|c|c|c}
        \textbf{Flow} & \textbf{P} & \makecell{\textbf{b}\\(kbit)} & \makecell{\textbf{Period}\\(ms)} & \makecell{\textbf{r}\\(Mbps)} & \textbf{Path} \\ \hline \hline
        $f_0$ & 0 & 14.4 & 1 & 14.4 & $ES_0 \to ES_2$ \\ \hline
        $f_1$ & 0 & 14.4 & 1 & 14.4 & $ES_2 \to ES_1$ \\ \hline
        $f_2$ & 1 & 0.96 & 1 & 0.96 & $ES_1 \to ES_2$ \\ \hline
        $f_3$ & 1 & 0.96 & 1 & 0.96 & $ES_2 \to ES_1$ \\ \hline
        $f_4$ & 1 & 0.96 & 1 & 0.96 & $ES_3 \to ES_2$ \\ \hline
        $f_5$ & 2 & 123.4 & $\infty$ & 1.234 & $ES_3 \to ES_2$ \\ \hline
    \end{tabular}
    \label{tab:flow_characteristics}
\end{table}
\begin{table}[t]
    \centering
    \begin{tabular}{c|p{1.5cm}|p{1.6cm}|p{1.7cm}|p{1.3cm}}
        \text{Flow} & \text{src Delay} \newline \text{Bound} ($\mu s$) & \text{Delay bound} \newline \text{(NP-SP)}($\mu s$) & \text{Delay bound} \newline\text{(Partial CBS)} ($\mu s$) & \text{Deadline} \newline ($\mu s$) \\ \hline \hline
        $f_0$ & 216 & 314 & 479 & 1000 \\ \hline
        $f_1$&  285 & 213 & 378 & 1000 \\ \hline
        $f_2$ & - & 543 & 396 & 535 \\ \hline
        $f_3$& - & 560 & 419 & 555 \\ \hline
        $f_4$& - & 473 & 472 & 472 \\ \hline
        $f_5$& - & - & - \\ \hline
    \end{tabular}
    \caption{Variables of interest to the CBS deployment algorithm from the verification tool}
    \label{tab:input}
\end{table}
\subsection{Complexity Analysis}
\begin{theorem}\label{th:complexity}
The computational complexity of the \textit{partialCBSDeployment} algorithm is 
\[
O\left(|\mathcal{F}|^3 \cdot |DEV| +|\mathcal{F}|^2\cdot |DEV|^2\cdot log(|DEV|)\right)
\]

where $|\mathcal{F}|$ represents the number of flows traversing the network and $|DEV|$ is the number of devices in the network.
\end{theorem}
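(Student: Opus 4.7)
The plan is to derive the complexity by first bounding the number of iterations of the outer \texttt{while} loop at Line 13 and then accounting for the worst-case cost of the work done within a single iteration. First I would argue that every pass of the \texttt{while} loop strictly reduces the set \texttt{unschedNonShaped}: the chosen \texttt{foi} always belongs to the set $DA$ computed on Line 30, because \texttt{foi} traverses the device selected on Line 17, and therefore it is removed on Line 32 along with any other directly and indirectly affected flows. Hence the \texttt{while} loop runs at most $O(|\mathcal{F}|)$ times. The key observation that enables the rest of the analysis is that all paths are contained in the graph, so a flow path has length at most $O(|DEV|)$ and $|OP(dev)|$ is bounded by $O(|DEV|)$.

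Next I would decompose the per-iteration cost into three groups. The first group is the selection and CBS placement phase (Lines 15--29). Picking the most constrained flow (Line 15) costs $O(|\mathcal{F}|)$; one call to \texttt{getDev} requires traversing the path of \texttt{foi} and, at each device, checking the flows and priority levels available, costing $O(|\mathcal{F}|\cdot|DEV|)$; one call to \texttt{getOrderedOps} needs to compute a sort-key per output port (itself an $O(|\mathcal{F}|)$ scan) followed by a sort of up to $|DEV|$ ports, yielding $O(|\mathcal{F}|\cdot|DEV|+|DEV|\log|DEV|)$; and the inner \texttt{for} loop performs at most $O(|DEV|)$ iterations, each dominated by \texttt{computeIdleSlope}, by updating \texttt{shapedFlows}, and by the constraint check, all linear in $|\mathcal{F}|$. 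Because Line 27 may trigger a restart at Line 16 each time a device is excluded, this whole phase can be repeated up to $O(|DEV|)$ times, giving a total of $O(|\mathcal{F}|\cdot|DEV|^{2}\log|DEV|)$ per \texttt{while} iteration once the comparison and sorting costs are combined. The second group is the \emph{DA} computation at Line 30, which scans \texttt{unschedNonShaped} and tests membership in $\mathcal{F}(\texttt{dev})$ in $O(|\mathcal{F}|\cdot|DEV|)$. The third group is the \emph{IA} computation at Line 31, where for each of $O(|\mathcal{F}|)$ candidate flows and each of $O(|\mathcal{F}|)$ reference flows in $DA\cup\texttt{shapedFlows}$ we test path intersection in $O(|DEV|)$, amounting to $O(|\mathcal{F}|^{2}\cdot|DEV|)$.

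Summing these three groups gives a per-iteration cost of $O\!\left(|\mathcal{F}|^{2}\cdot|DEV|+|\mathcal{F}|\cdot|DEV|^{2}\log|DEV|\right)$, and multiplying by the $O(|\mathcal{F}|)$ iterations of the \texttt{while} loop yields the announced bound. I would also handle the reconfiguration branch (Lines 2--10) separately and observe that its cost is dominated by \texttt{recomputeIdleSlopes}, which runs in $O(|\mathcal{F}|\cdot|DEV|)$ and is absorbed by the main bound. The main obstacle I anticipate is the accounting of the backtracking triggered at Line 27: one must show that \texttt{devsToExclude} prevents any device from being re-tried, so the cumulative number of \texttt{getDev}/\texttt{getOrderedOps} reattempts within a single \texttt{while} iteration cannot exceed $|DEV|$. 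Once this monotonicity is established, combining it with the path-intersection bound for \emph{IA} gives exactly the two terms in the stated complexity.
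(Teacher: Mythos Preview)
Your proposal is correct and reaches the stated bound, but your decomposition differs from the paper's in two notable ways. First, you bound the number of \texttt{while} iterations by $O(|\mathcal{F}|)$, arguing that \texttt{foi} is always in $DA$ and hence removed; the paper instead asserts the loop executes $O(|\mathcal{F}(p>0)|^2)$ times (seemingly conflating the iteration count with the aggregate size of \texttt{unschedNonShaped} across iterations) and compensates by assigning lighter per-step costs, e.g.\ $O(|DEV|)$ for \texttt{getDev} and $O(|DEV|\log|DEV|)$ for \texttt{getOrderedOps}, without the $|\mathcal{F}|$ factor you include for scanning flows. Second, for the $IA$ step the paper separates shaped and non-shaped flows and bounds the cost by $O(|\mathcal{F}(p>0)|\cdot|DEV|\cdot|\mathcal{F}(0)|)$, whereas you simply use $O(|\mathcal{F}|^2\cdot|DEV|)$; both collapse to the same asymptotic term once $|\mathcal{F}(p>0)|$ and $|\mathcal{F}(0)|$ are bounded by $|\mathcal{F}|$. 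Your handling of the Line~27 backtracking via \texttt{devsToExclude} is essentially what the paper intends with its remark that the \texttt{go to} ``happens at most once per \texttt{for} loop iteration,'' though you make the monotonicity argument explicit (and in fact \texttt{devsToExclude} persists across \texttt{while} iterations, so your per-iteration bound of $|DEV|$ restarts is conservative). Overall your route is arguably cleaner: the $O(|\mathcal{F}|)$ iteration bound is the natural one, and pushing the extra $|\mathcal{F}|$ factors into the per-step costs makes the origin of each term in the final expression more transparent.
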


\begin{proof}
To compute the complexity of the \texttt{partialCBSDeployment} algorithm, we analyze the cost of each step. If it is not specified, it means instructions can be performed in constant time.
\begin{enumerate}
    \item Initialization:
   \begin{itemize}[leftmargin=*]
       \item \texttt{recomputeIdleSlopes}: Has a complexity of $O(|DEV| \cdot OP_{dev})$, where $OP_{dev}$ represents the number of output ports for a device. In the worst case, $dev$ will be connected to every other device in the network, thus $OP_{dev}=|DEV|-1$, leading to a simplified complexity of $O(|DEV|^2)$.
   \end{itemize}
   \item Outer \texttt{while} loop:
    \begin{itemize}
        \item The \texttt{while} loop iterates until \texttt{unschedNonShaped} becomes empty. In the worst case, \texttt{unschedNonShaped} contains $|\mathcal{F}(p>0)|$ flows. Since one flow is removed per iteration, the loop executes $|\mathcal{F}(p>0)| + (|\mathcal{F}(p>0)| - 1) + \dots + 1 = O(|\mathcal{F}(p>0)|^2)$ times.
    \end{itemize}
    \item Inside the \texttt{while} loop:
   \begin{itemize}
       \item \texttt{mostConstrainedFlow}: $O(|unschedNonShaped|)$, which is $O(|\mathcal{F}(p>0)|)$ in the worst case.
       \item \texttt{getDev}: $O(|DEV|)$.
       \item \texttt{getOrderedOps}: $O(OP_{dev} \log(OP_{dev}))$ for sorting. Therefore, in the worst case $O\left(|DEV|\cdot log(|DEV|)\right)$.
       \item \texttt{for} loop over $|ops|$: Each device has $OP_{dev}$ output ports, so the loop runs $O(OP_{dev})$ times. In the worst case, the loop runs $|DEV|$ times.
       \item Inside the \texttt{for} loop:
           \begin{itemize}
               \item Re-selection of the device (\texttt{go to}): Happens at most once per \texttt{for} loop iteration.
           \end{itemize}
       \item Computing \texttt{DA}: $O(|unschedNonShaped| \cdot |\phi^f|)$, where $|\phi^f| \leq |DEV|$ in the worst case. Thus, $O(|\mathcal{F}(p>0)| \cdot |DEV|)$.
       \item Computing \texttt{IA}: $O(|unschedNonShaped| \cdot |\phi^f| \cdot (1 + |\texttt{shapedFlows}|))$. With $|\phi^f| \leq |DEV|$, this becomes $O(|\mathcal{F}(p>0)| \cdot |DEV| \cdot |shapedFlows|)$. Considering $|shapedFlows|=|\mathcal{F}|-|\mathcal{F}(p>0)|$ in the worst case, we denote this substraction as $|\mathcal{F}(0)$. Thus, the complexity becomes $O\left(|\mathcal{F}(p>0)| \cdot |DEV| \cdot |\mathcal{F}(0)|\right)$.
   \end{itemize}
   \item Total Complexity:

    Aggregating these factors and removing those that are from the expression, we obtain:
    
\end{enumerate}
\begin{align*}
O\big(&|\mathcal{F}(p > 0)|^3 \cdot |DEV| \\
     &+ |\mathcal{F}(p > 0)|^2 \cdot |DEV|^2 \cdot \log(|DEV|) \\
     &+ |\mathcal{F}(p > 0)|^2 \cdot |DEV| \cdot |F_{p = 0}|\big)
\end{align*}
\item Since $|\mathcal{F}(p>0)|$ and $|\mathcal{F}(0)|$ can both be bounded by $|\mathcal{F}|$, the expression is simplified to:
\[
O\left(|\mathcal{F}|^3 \cdot |DEV| +|\mathcal{F}|^2\cdot |DEV|^2\cdot log(|DEV|)\right)
\]
\end{proof}

From the result of Theorem \ref{th:complexity}, we observe that the complexity of running the algorithm depends on the number of devices and flows traversing it.

\section{Industrial Use Case Evaluation}
\label{sec:industrial}
Figure \ref{fig: Adapted Industrial Automotive Topology} illustrates the industrial use case adapted from \cite{cuenot_2021}, representing an automotive network architecture. Table \ref{tab:flow_characteristics_indus} details the flow characteristics traversing the network. For this network, we analyze the number of TSN devices and deployed CBS, the Cumulative Distribution Function (CDF) of delays for high-priority shaped flows (Partial CBS vs. Full CBS), and the CDF of delays for medium-priority non-shaped flows (Partial CBS vs. NP-SP only). Here, the IdleSlopes for Full CBS deployment were computed by selecting the minimum IdleSlope value allowing for the schedulability of every shaped flow.
\begin{figure}[h]
	\centering
	\includegraphics[width=\columnwidth]{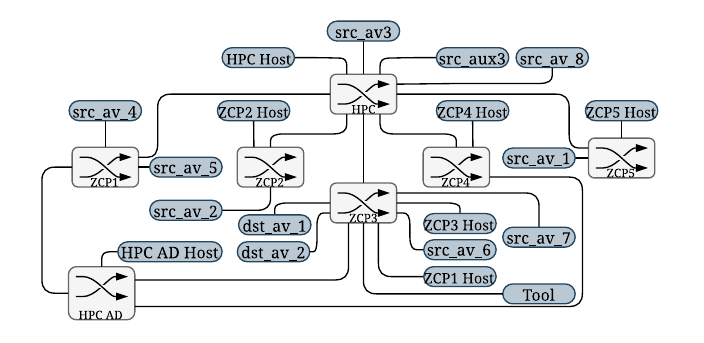}
	\caption{Adapted Industrial Automotive Topology from \cite{cuenot_2021}}
	\label{fig: Adapted Industrial Automotive Topology}
\end{figure}

\begin{table}[h]
    \caption{Adapted Flow Characteristics for the Industrial Automotive Topology \cite{cuenot_2021}}
    \label{tab:flow_characteristics_indus}
    \centering
    \small
    \begin{adjustbox}{max width=\columnwidth}
    \begin{tabular}{p{1cm} p{1cm} p{1cm} p{1.3cm} p{1.3cm} p{2cm} p{2cm}}
        \toprule
        \textbf{Name} & Priority &\textbf{rate (kB/s)} & \textbf{Deadline (ms)} & \textbf{Burst (B)} & \textbf{Sources} & \textbf{Destinations} \\
        \midrule
        AVB\_1 & 0 & 2224 & 1.25  & 2780 & src\_av\_1 & dst\_av\_\{1-2\} \\ \hline
        AVB\_2 & 0 & 472 & 10  & 15570 & src\_av\_\{3-7\} & HPC host \\ \hline
        AVB\_3 & 0 & 1307 & 33  & 43140 & src\_av\_8 & HPC host \\ \hline
        CAN\_1 & 1 & 613 & 0.075  & 46 & ZCP1 Host, HPC Host & ZCP\{1-5\}, HPC Host \\ \hline
        CAN\_2 & 1 &  1040 & 0.075  & 78 & ZCP1 Host, HPC Host & ZCP\{1-5\}, HPC Host \\ \hline
        CAN\_3 & 1 & 1146 & 0.075  & 86 & ZCP1 Host, HPC Host & ZCP\{1-5\}, HPC Host \\ \hline
        CAN\_4 & 1 & 560 & 0.075  & 42 & ZCP1 Host, HPC Host & ZCP\{1-5\}, HPC Host \\ \hline
        FILE & 2 & 63840 & -  & 12768000 & HPC Host & Tool \\ \hline
        \bottomrule
    \end{tabular}
    \end{adjustbox}
\end{table}

\begin{table}[h]
    \centering
    \caption{Deployed CBS for the Industrial Use-Case Network}
    \label{tab:indus}
    \begin{tabular}{c|c|c} 
        \textbf{Device} & \textbf{Output Port} & \textbf{IdleSlope} \\ \hline \hline
        HPC & HPC $\rightarrow$ HPC\_Host & 123 249 864 bps \\ \hline
        HPC & HPC $\rightarrow$ ZCP3 & 46 419 984 bps \\ \hline
        ZCP3 & ZCP3 $\rightarrow$ HPC & 123 281 022 bps \\ \hline
    \end{tabular}
\end{table}

\textbf{Impact on Network Costs}: Table \ref{tab:indus} presents the IdleSlopes computed by the framework and the corresponding output ports. These results reveal that only $2/7$ devices are TSN-enabled, representing a $70\%$ reduction compared to a Full CBS deployment. Furthermore, only $3/34$ CBS instances were deployed, resulting in a $91\%$ reduction.

\textbf{Impact on shaped flows performance}: Figure \ref{fig:CDF for delays of flows p=0} compares the CDFs of delays for shaped, high-priority flows under Partial CBS and Full CBS deployments. The horizontal deviation between the curves quantifies the improvement, with a maximum delay reduction of $24\%$ and a minimum of $4\%$. Notably, the Partial CBS CDF is consistently on top over the Full CBS CDF, demonstrating reduced blocking effects and improved delays for shaped flows.

\textbf{Impact on non-shaped flows schedulability}: Figure \ref{fig:CDF for delays of flows p=1} analyzes the CDFs of unshaped, medium-priority flows under Partial CBS and NP-SP. The plot also highlights the maximum deadline. Additionally, while some flows remain unschedulable under NP-SP, all flows meet their deadlines with Partial CBS deployment, where schedulability improves by $60\%$ under Partial CBS. This result shows the algorithm's effectiveness to reduce unshaped flows delays by leveraging CBS fairness enhancement.

\begin{figure}[h!] 
    \centering
    \includegraphics[width=0.9\linewidth]{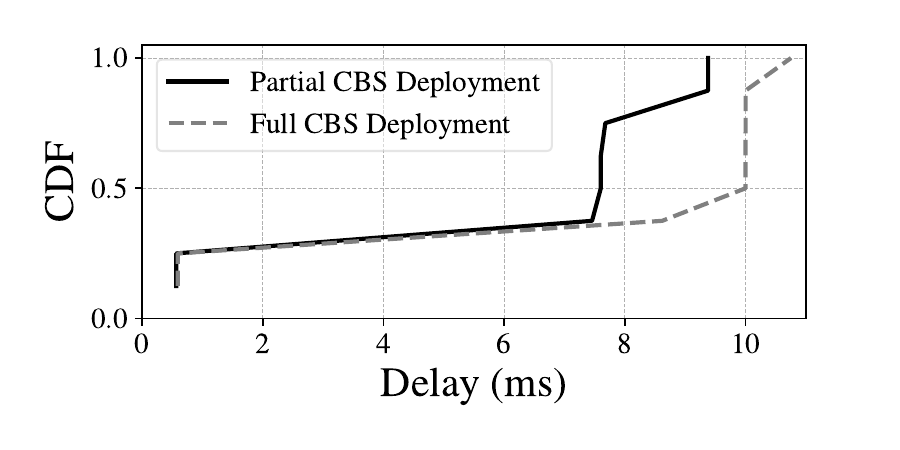}
    \vspace{-0.25cm}
    \caption{CDF for delay bounds of shaped flows, p=0}
     \label{fig:CDF for delays of flows p=0}
\end{figure}

\begin{figure}[h!] 
    \centering
    \hspace{-1cm}
    \includegraphics[width=\linewidth]{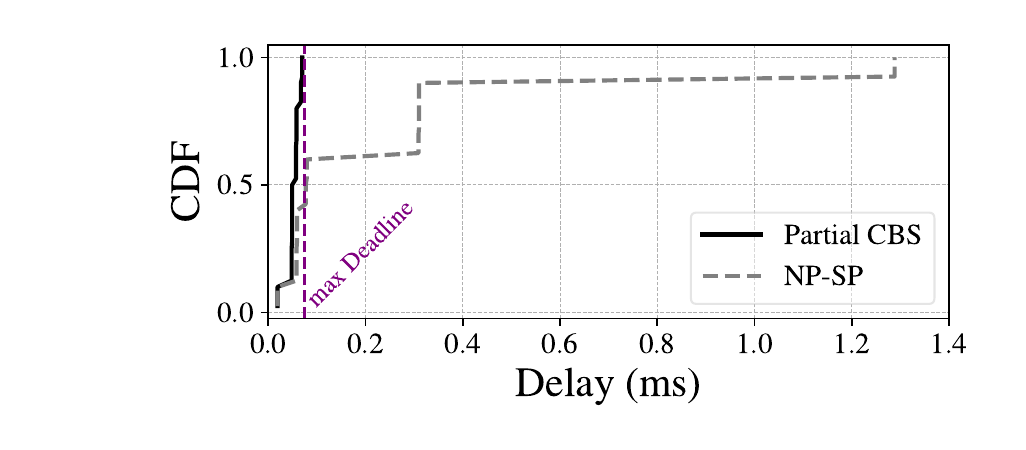}
    \vspace{-0.25cm}
    \caption{CDF for delay bounds of non-shaped flows, p=1}
     \label{fig:CDF for delays of flows p=1}
\end{figure}

Finally, we evaluate the framework's execution time. Processing the entire framework for this network takes $60$ ms. The placement algorithm required only a single execution to find a valid solution, taking $38$ ms, while the verification process consumed $22$ ms. These results show the efficiency of the algorithm to rapidly converge to a solution.

\section{Conclusions and Future Work}
\label{sec: conclusion}
This paper proposes a Partial CBS deployment framework to reduce TSN device usage and costs and validates its benefits on a representative automotive use case. The proposed approach in this paper effectively clusters CBS to key devices which effectively decreases deployment costs (up to 70\% of reduction). It also minimizes blocking effects on shaped flows with reference to Full CBS, while enhancing medium-priority flow schedulability (up to 60 \%) compared to legacy NP-SP solution.

Future work will perform a more in-depth performance evaluation of our approach by looking at factors such as efficiency and scalability through modifying network size and number of flows. We will also consider load variation and interference patterns to verify the performance of Partial CBS deployment. Extensions to the approach will include accommodating preset legacy devices in the optimization process. We will also consider CBS deployment in the context of Frame Replication and Elimination in the context of time-sensitive networks.

%
\IEEEpeerreviewmaketitle

\ifCLASSOPTIONcaptionsoff
  \newpage
\fi



%

\bibliographystyle{IEEEtran}
\vspace{-0.05in}
\bibliography{ref}

%
%
%
%
%
%
%
%
\end{document}